\newcommand{\numColumns}{2}           
\newcommand{\isUseTableOfContent}{0}  
\newcommand{\isUseThanks}{1}          
\newcommand{\isUseThightHeaders}{1}   
\newcommand{\dateMode}{2}             
\newcommand{\myBibliographyStyle}{1}  
\newcommand{\myBibliographyFile}{support_files/references}
\newcommand{\myDocumentType}{Preprint}  
\newcommand{\myAuthor}{Marco Coraggio*, Mario di Bernardo}
\newcommand{\myTitle}{Data-driven design of complex network structures\\to promote synchronization}
\newcommand{\myHeaderAuthorTitle}{M. Coraggio, M. di Bernardo: Data-driven design of complex network structures to promote synchronization}
\newcommand{\myDate}{15 September 2023}
\newcommand{\myThanks}{%
This work was in part supported by the Research Project \textsc{``Sharespace''} funded by the European Union (EU HORIZON-CL4-2022-HUMAN-01-14.~SHARESPACE.~GA 101092889).

M.~Coraggio is with the Scuola Superiore Meridionale (SSM), School for Advanced Studies (marco.coraggio@unina.it).
M.~di Bernardo is with the Dept.~of Information Technology and Electrical Engineering, Univ.~of Naples Federico II, and with the SSM (mario.dibernardo@unina.it).}
\newcommand{\myAbstract}{%
We consider the problem of optimizing the interconnection graphs of complex networks to promote synchronization.
When traditional optimization methods are inapplicable, due to uncertain or unknown node dynamics, we propose a data-driven approach leveraging datasets of relevant examples. 
We analyze two case studies, with linear and nonlinear node dynamics.
First, we show how including node dynamics in the objective function makes the optimal graphs heterogeneous.
Then, we compare various design strategies, finding that the best either utilize data samples close to a specific Pareto front or a combination of a neural network and a genetic algorithm, with statistically better performance than the best examples in the datasets.}
	\newcommand{\bibliographyStyleName}{IEEEtran}
	\newcommand{\bibliographyStyleName}{IEEEtranSN}
	\renewcommand{\cite}{\citep} 
\numberwithin{equation}{section}	
	\let\oldtabular\tabular
	\renewcommand{\tabular}{\footnotesize\oldtabular}
\setlist{noitemsep,topsep=0.5ex}     
\definecolor{myLightBlue}{rgb}{0.09 0.36 0.63}
\colorlet{colorAccent}{myLightBlue}
\colorlet{colorAccentLight}{gray!15}
\let\oldheadrule\headrule
\let\oldfootrule\footrule
\renewcommand{\headrule}{{\color{colorAccent}\oldheadrule}}
\renewcommand{\footrule}{{\color{colorAccent}\oldfootrule}}
\titleformat{\part}[display]{\centering\sffamily\Huge\bfseries}{\textsc{\partname}\thepart\\ \vspace*{1ex}{\Huge\color{colorAccent}------ $\cdot$ ------}}{1.2ex}{\huge}
\titleformat{\chapter}{\sffamily\bfseries\Huge\color{colorAccent}}{\thechapter \	}{10pt}{}
	\titleformat*{\section}{\centering\large\sffamily\color{colorAccent}}
	\titleformat*{\subsection}{\sffamily\color{colorAccent}}
	\titlespacing\section{0pt}{3ex}{1ex}
	\titlespacing\subsection{0pt}{2.5ex}{0.5ex}
	\titleformat*{\section}{\Large\sffamily\color{colorAccent}}
	\titleformat*{\subsection}{\large\sffamily\color{colorAccent}}
\titleformat*{\subsubsection}{\sffamily\itshape\color{colorAccent}\/}
\titleformat*{\paragraph}{\color{colorAccent} \itshape \sffamily}
\titlespacing*{\paragraph}{0pt}{0pt}{10pt}
\newtheoremstyle{mythmstyle}
{0.7em}
{1em}
{\itshape}
{}
{\sffamily \color{colorAccent}}
{.}
{ }
{{\bfseries\thmname{#1}\thmnumber{ #2}}\thmnote{ (#3)}}  
\theoremstyle{mythmstyle}			 
\newmdtheoremenv[style=myshadedthm]{theorem}{Theorem}[section]
\newmdtheoremenv[style=myshadedthm]{lemma}[theorem]{Lemma}
\newmdtheoremenv[style=myshadedthm]{proposition}[theorem]{Proposition}
\newmdtheoremenv[style=myshadedthm]{corollary}[theorem]{Corollary}
\newmdtheoremenv[style=myshadedthm]{definition}[theorem]{Definition}
\newmdtheoremenv[style=myshadedthm]{assumption}[theorem]{Assumption}
\newmdtheoremenv[style=myshadedthm]{problem}[theorem]{Problem}
\newmdtheoremenv[style=myshadedthm]{remark}[theorem]{Remark}
\newmdtheoremenv[style=myshadedthm]{conjecture}[theorem]{Conjecture}
\newcommand{\BB}[1]{\mathbb{#1}}
\newcommand{\T}{^{\mathsf{T}}}
\newcommand{\R}[1]{\mathrm{#1}}
\newcommand{\B}[1]{\if#1\relax\bm{#1}\else\mathbf{#1}\fi} 
\newcommand{\C}[1]{\mathcal{#1}}
\newcommand{\abs}[1]{\left\lvert #1 \right\rvert}
\title{\color{colorAccent}{\Huge \sffamily \myTitle}}%
\author{\color{colorAccent}{\normalsize \sffamily \itshape \myAuthor\/}
\ifnum\isUseThanks=1  \thanks{\myThanks\/}  \fi}
\date{}
\renewcommand\footnotemark{}
\patchcmd{\@maketitle}{{\@bspredate \@date \@bspostdate} \maketitlehookd \par \vskip 1.5em}{\vskip 0.5em}{}{}
\begin{document}

\maketitle
\thispagestyle{firstpage}  
\begin{abstract}
  \noindent \normalsize {\color{colorAccent}{\textsf{\textbf{Abstract. \;}}}}\myAbstract
\end{abstract}
\vspace{3ex}

\ifnum\isUseTableOfContent=1
  \tableofcontents
  \clearpage
\fi

\def\marginUnderSubfig{-4mm}

\section{Introduction}%
\label{sec:introducion}

In complex networks, the graph structure is a crucial component in determining the appearance of collective behavior such as synchronization, which is relevant in numerous applications, ranging from power systems to social networks, to biological processes	\cite{pikovskij2003synchronization}.
Thus, it is critical to devise tools to design network graphs that facilitate (or impede) 
synchronization.

In this paper, we introduce the data-driven network design problem, which serves as a flexible framework when traditional optimization methods are inapplicable (e.g., because knowledge of the node dynamics is incomplete or unavailable).
We explore two case studies, one with linear and the other with nonlinear node dynamics. 
The analysis shows that graph homogeneity, while important, is not enough to optimize synchronization-related metrics that include node dynamics.
Then, we present multiple data-driven network design strategies and assess their performance across different datasets. 
We find that the best strategies are those that generate suboptimal network structures by utilizing data samples close to a specific Pareto front or by leveraging the combination of a neural network and a genetic algorithm.

\paragraph{Related work}

The impact of a network's graph on synchronizability is typically measured by its eigenratio (the ratio between the largest and smallest non-zero eigenvalues of the associated Laplacian matrix) or its algebraic connectivity.
For high synchronizability, the former should be minimized \cite{pecora1998master}, while the latter should be maximized \cite{coraggio2018synchronization,coraggio2020distributed}.
Early studies showed that small-world networks have smaller eigenratios than random graphs \cite{barahona2002synchronization}, and that scale-free and small-world graphs become less synchronizable as they become more heterogeneous \cite{nishikawa2003heterogeneity}.
Crucially, in \cite{donetti2005entangled}, an iterative rewiring process revealed that graphs minimizing the eigenratio exhibit an \emph{entangled} structure.%
\footnote{An entangled graph has a homogeneous structure, characterized by low variance in degrees, in betweenness centralities, and in shortest path lengths, small diameter, large girth, large average of the shortest cycles from a vertex to itself, and an absence of community structure.}
In subsequent research \cite{donetti2006optimal}, it was observed that heterogeneity in coupling strength led to more heterogeneous optimally synchronizable graphs.

In \cite{nishikawa2006synchronization,fazlyab2017optimal,kempton2018selforganization}, 
optimal graphs were sought by assigning weights and/or directions to graphs' edges, or by assigning the frequencies of oscillator nodes.
In \cite{estrada2010design}, the authors introduced procedures to construct \emph{golden spectral networks}, which are sparse, highly synchronizable and robust to vertex/edge removal.
Recently, in \cite{lei2023new},  Lyapunov functions were used to design optimally synchronizable networks of oscillators, relying on knowledge of the nodes' frequencies.
Additional network design methods were surveyed in \cite{jalili2013enhancing}.

Notably, many previous studies assessing synchronization properties overlook the influence of node dynamics, despite evidence indicating its significance \cite{donetti2006optimal}.
An exception can be found in \cite{gorochowski2010evolving}, where a rewiring procedure demonstrated that optimal graphs may not necessarily exhibit an entangled structure, when node dynamics is considered.

To the best of our knowledge, data-driven approaches have not yet been employed as the primary tool for designing optimally synchronizable networks. 
Although, they have been used to control complex networks \cite{baggio2021datadriven,celi2023distributed} and to identify network graphs \cite{timme2007revealing}.

\section{Preliminaries}%
\label{sec:preliminaries}

\paragraph{Notation}

The $i$-th element of a vector $\B{x}$ is denoted by $x_i$;
$\R{re}(\cdot)$ is the real part;
$\R{round}(\cdot)$ is the nearest integer;
$\lceil \cdot \rceil$ and $\lfloor \cdot \rfloor$ are the nearest larger and smaller integers, respectively;
$\abs{\cdot}$ is the absolute value of a number or the cardinality of a set;
$\R{corr}(\cdot, \cdot)$ is the correlation;
$\R{tr}(\cdot)$ is the trace;
$\mu_2(\cdot)$ is the logarithmic $2$-norm;
$\lambda_i(\cdot)$ is the $i$-th eigenvalue (sorted from smallest to largest, when they are all real);
$k \text{-} \R{args} \max (\cdot)$ ($\min$) are the $k$ values that maximize (minimize) a quantity.

\paragraph{Graphs}

We always consider \emph{undirected} and \emph{unweighted} graphs \cite{boccaletti2006complex}.
Given a \emph{graph} $g = (\C{V}, \C{E})$, 
$\C{V}$ is the set of \emph{vertices} and
$\C{E}$ is the set of \emph{edges};
moreover, $n_\R{v} \coloneqq \abs{\C{V}}$ and $n_\R{e} \coloneqq \abs{\C{E}}$. 
We define $n_{\R{e}}^{\R{min}} \coloneqq n_\R{v} - 1$ and 
$n_{\R{e}}^{\R{max}} \coloneqq n_\R{v} (n_\R{v} - 1) / 2$.
$\B{L}(g)$ is the \emph{Laplacian matrix} of $g$.
The \emph{algebraic connectivity} of a connected graph is  $\lambda_2(\B{L})$ and its \emph{eigenratio} is $Q \coloneqq \lambda_{n_\R{v}}(\B{L}) / \lambda_2(\B{L})$.
The \emph{density} of a graph is $s \coloneqq \frac{2 n_{\R{e}}}{n_{\R{v}} (n_{\R{v}} - 1)}$.

\begin{definition}[Degree]\label{def:degree}
	The \emph{degree} $d_i$ of vertex $i$ is the number of edges connected to it.
	The \emph{mean degree} is $\R{mean}(\B{d}) \coloneqq \frac{1}{n_{\R{v}}}\sum_{i = 1}^{n_{\R{v}}} d_i$.
	The \emph{normalized degree deviation} of vertex $i$ is $\hat{d}_i \coloneqq \frac{d_i - \R{mean}(\B{d})}{n_{\R{v}}-1}$.
	The \emph{sample variance of degrees} is
	\(
		\R{var}(\B{d}) = \frac{1}{n_{\R{v}}-1}\sum_{i = 1}^{n_{\R{v}}} \left( d_i - \R{mean}(\B{d}) \right)^2
	\).
	The \emph{normalized variance of the node degrees} $\widehat{\R{var}}(\B{d})$ is $\frac{n_{\R{v}}-1}{n_{\R{v}} n_{\R{e}} (1-s)} \R{var}(\B{d})$ if $s \in \ ]0, 1[$ and is $0$ if $s = \{0, 1\}$ \cite{smith2020normalised}.
\end{definition}

We denote by $p_{jk}$ the number of \emph{shortest paths} from vertex $j$ to vertex $k$, and by $p_{jk}^i$ the number of these passing through vertex $i$ \cite{boccaletti2006complex}.

\begin{definition}[Betweenneess centrality]\label{def:betweenness}
		The \emph{betweenneess centrality} of vertex $i$ is
		$b_i \coloneqq \sum_{j, k \neq i} \frac{p_{jk}^i}{p_{jk}}$.
		The \emph{mean betweenness centrality} is 
		$\R{mean}(\B{b}) \coloneqq \frac{1}{n_{\R{v}}}\sum_{i = 1}^{n_{\R{v}}} b_i$.
		The \emph{normalized betweenness centrality deviation} is
		$\hat{b}_i \coloneqq \frac{b_i - \R{mean}(\B{b})}{(n_{\R{v}}-1)(n_{\R{v}}-2)/2}$.
		The \emph{sample variance of betweenness centralities} is 
		$\R{var}(\B{b}) = \frac{1}{n_{\R{v}}-1}\sum_{i = 1}^{n_{\R{v}}}(b_i - \R{mean}(\B{b}))^2$.
		The \emph{normalized variance of betweenness centralities} is%
		\footnote{Obtained by dividing $\R{var}(\B{b})$ by the sample variance of betweenness centralities of a star graph with infinite vertices, that is $\frac{1}{n_{\R{v}}-1} \big( \frac{(n_{\R{v}}-1)(n_{\R{v}}-2)}{2} \big) ^2$.}
		$\widehat{\R{var}}(\B{b}) = 4\frac{\R{var}(\B{b})}{(n_{\R{v}}-1)(n_{\R{v}}-2)^2}$.
\end{definition}

\section{Problem statement}%
\label{sec:problem_statement}

In general, we aim to find the graph structure of a complex network, which optimizes some objective function, in the presence of constraints.
We assume that lack of information or practical difficulties prevent the use of a traditional optimization algorithm, but that datasets of previous examples are available to inform the network design.

Formally, let $\C{S}$ be the set of continuous-time smooth dynamical systems, and let
$\C{S}^{n_\R{v}} \coloneqq \left( \C{S} \times \dots \times \C{S}\right)_{\text{${n_\R{v}}$ times}}$, for some ${n_\R{v}} \in \BB{N}_{\ge 2}$.
Let $\C{G}^{n_\R{v}}$ be the set of graphs with ${n_\R{v}}$ vertices.
Then, $\C{N}^{n_\R{v}} \coloneqq (\C{S}^{n_\R{v}} \times \C{G}^{n_\R{v}}, m)$ is the set of \emph{complex networks} with ${n_\R{v}}$ nodes, coupled through a coupling protocol $m$ (e.g., the linear diffusive one).
For example, $q \in \C{S}^{n_\R{v}}$ is a set of ${n_\R{v}}$ dynamical systems, $g \in \C{G}^{n_\R{v}}$ is a graph with ${n_\R{v}}$ vertices, and $\eta = (q, g, m)$ is a complex network with ${n_\R{v}}$ nodes.

Next, let
$J : \left( \bigcup_{{n_\R{v}} \in \BB{N}_{>0}} \C{N}^{n_\R{v}} \right) \rightarrow \BB{R}$ 
be the \emph{objective function}, measuring how good a network is with respect to some criterion, and let 
$\rho : \left( \bigcup_{{n_\R{v}} \in \BB{N}_{>0}} \C{G}^{n_\R{v}} \right) \rightarrow \BB{R}$ 
be the \emph{resource function}, measuring the resources consumed by a network.
Consider now a \emph{dataset} 
\begin{equation}\label{eq:dataset}
    \C{D} \coloneqq \left( \eta_h, J(\eta_h) \right)_{h \in \{1, \dots, n_{\R{d}}\}}    
\end{equation}
of $n_\R{d}$ data samples, each made of a complex network $\eta_h = (q_h, g_h, m) \in \C{N}^{n_\R{v}^h}$ with $n_\R{v}^h$ nodes and its associated objective value $J(\eta_h)$.
We aim to solve the following problem.

\begin{problem}[Data-driven network design]\label{prob:optimization_problem_general}
	Let $\eta^\diamond = (q^\diamond, g^\diamond, m^\diamond) \in \C{N}^{n_\R{v}^\diamond}$ be a complex network, where $q^\diamond \in \C{S}^{n_\R{v}^\diamond}$ is a set of $n_\R{v}^\diamond$ \emph{unknown} dynamical systems, $g^\diamond \in \C{G}^{n_\R{v}^\diamond}$ is a graph to be designed, with $n_\R{v}^\diamond$ vertices, and $m^\diamond$ is a fixed coupling protocol.
	Let $\C{D}$ be a dataset as in \eqref{eq:dataset}, with known network graphs $g_h$, known associated objective values $J(\eta_h)$, and unknown dynamical systems $q_h$, but with $q_h = q^\diamond \ \forall h$.
	Solve: $\max_{g^\diamond \in \C{G}^{n_\R{v}^\diamond}} J(\eta^\diamond)$ such that $\rho(g^\diamond) \le 0$.%
	\footnote{The problem can also be formulated with variations such as considering discrete-time dynamical systems, weighted graphs, equality constraints, etc.}
\end{problem}

Crucially, in Problem \ref{prob:optimization_problem_general}, it is impossible to compute $J(\eta^\diamond)$ for a given graph $g^\diamond$, as the node dynamics $q^\diamond$ are unknown.
The (possibly approximate) solution to Problem \ref{prob:optimization_problem_general} must be found exploiting the knowledge embedded in the dataset $\C{D}$.

\begin{remark}\label{rem:case_different_dynamics}
	A variation of Problem \ref{prob:optimization_problem_general} that is relevant for applications is that $q^\diamond$ and $q_h \ \forall h$ are allowed to be different but are known, although it is still impossible to optimize $J(\eta^\diamond)$ directly because either its  expression is unknown or the computation is unfeasible.
	In this case, different approaches from those presented in this paper should be employed; this matter will be the subject of future work.
\end{remark}

Next, we particularize the general Problem \ref{prob:optimization_problem_general} to two representative case studies.

\section{Case studies}%
\label{sec:case_studies}

\subsection{Case with linear node dynamics}%
\label{sec:linear_dynamics}%

We assume that the dynamical systems $q^\diamond$ are linear, scalar, heterogeneous, and stable, and that $m^\diamond$ is the linear diffusive coupling typically used in the literature \cite{scardovi2009synchronization}.
Hence, the dynamics of the complex network $\eta^\diamond$ are given by
\begin{equation}\label{eq:complex_network_linear}
	\dot{x}_i(t) = a_i x_i(t) + \sum\nolimits_{j = 1}^{n_{\R{v}}} L_{ij} (x_j(t) - x_i(t)), \ \  \forall i \in \{1, \dots, n_{\R{v}}\},
\end{equation}
where $x_i(t) \in \BB{R}$ is the state of dynamical system $i$, 
$a_i \in \BB{R}_{< 0}$,
and $[L_{ij}] = \B{L}(g^\diamond)$.
We let 
$\B{A} \coloneqq \R{diag}(a_1, \dots, a_{n_{\R{v}}})$ 
to rewrite \eqref{eq:complex_network_linear} as $\dot{\B{x}}(t) = (\B{A} - \B{L}) \B{x}$.
Note that \eqref{eq:complex_network_linear} always synchronizes to $\B{0}$, after a settling time determined by the spectrum of $\B{A}-\B{L}$.

We assume the goal is to find the network structure that minimizes the transient time to synchronization. 
Thus, we choose an objective function $J$ proportional to the dominant eigenvalue of network \eqref{eq:complex_network_linear}.
To give an expression for $J$, consider the slowest natural modes of the uncoupled and coupled dynamical systems, that is
$\lambda^\R{u} \coloneqq \max_i \R{re}(\lambda_i(\B{A}))$,
and
$\lambda^\R{c} \coloneqq \max_i \R{re}(\lambda_i(\B{A} - \B{L}))$, respectively.
The relation between $\lambda^\R{u}$ and $\lambda^\R{c}$ is given in the next Lemma, proved in the Appendix.

\begin{lemma}\label{lem:bounds_eigenvalue_coupled_net}
	It holds that $\beta \le \lambda^\R{c} \le \lambda^\R{u} < 0$, where
	$\beta \coloneqq \frac{1}{n_{\R{v}}} \left(\sum_{i = 1}^{n_{\R{v}}} a_i - (n_{\R{v}} (n_{\R{v}}-1))\right)$.
\end{lemma}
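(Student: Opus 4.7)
The plan is to split the chain of inequalities into three pieces and treat each one independently, exploiting the fact that, since $\B{A}$ is diagonal (hence symmetric) and $\B{L}$ is a (symmetric) graph Laplacian, the matrices $\B{A}$ and $\B{A} - \B{L}$ are both real symmetric. Consequently, their spectra are real, so every $\R{re}(\cdot)$ in the statement is redundant and we may work with ordinary eigenvalues throughout.

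First, I would dispatch the trivial bound $\lambda^\R{u} < 0$ by observing that the eigenvalues of the diagonal matrix $\B{A}$ are exactly the entries $a_i$, all of which are strictly negative by assumption, so $\lambda^\R{u} = \max_i a_i < 0$. Next, for the middle inequality $\lambda^\R{c} \le \lambda^\R{u}$, I would invoke the fact that $\B{L}$ is positive semidefinite (being a graph Laplacian). Hence $\B{A} - \B{L} \preceq \B{A}$ in the Loewner order, and by the Courant–Fischer (or Weyl monotonicity) characterization of the largest eigenvalue of a symmetric matrix we immediately get $\lambda_{\max}(\B{A} - \B{L}) \le \lambda_{\max}(\B{A})$, which is precisely $\lambda^\R{c} \le \lambda^\R{u}$.

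The only slightly less immediate step is the lower bound $\beta \le \lambda^\R{c}$. The key observation is that for any real symmetric $n_\R{v}\times n_\R{v}$ matrix $\B{M}$, the largest eigenvalue satisfies $\lambda_{\max}(\B{M}) \ge \frac{1}{n_\R{v}}\R{tr}(\B{M})$ (since the trace equals the sum of the $n_\R{v}$ real eigenvalues). Applied to $\B{M} = \B{A} - \B{L}$, this yields $\lambda^\R{c} \ge \frac{1}{n_\R{v}}\bigl(\sum_i a_i - \R{tr}(\B{L})\bigr) = \frac{1}{n_\R{v}}\bigl(\sum_i a_i - 2 n_\R{e}\bigr)$, using that the diagonal of a Laplacian sums to the sum of the vertex degrees, which equals $2 n_\R{e}$. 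To obtain $\beta$ as stated, I would then upper bound the edge count by the maximum admissible number of edges for an undirected simple graph, $n_\R{e} \le n_\R{e}^{\R{max}} = n_\R{v}(n_\R{v}-1)/2$, so that $2 n_\R{e} \le n_\R{v}(n_\R{v}-1)$; substituting produces exactly $\lambda^\R{c} \ge \frac{1}{n_\R{v}}\bigl(\sum_i a_i - n_\R{v}(n_\R{v}-1)\bigr) = \beta$.

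No step here is genuinely hard; the only mild subtlety is recognizing that the cleanest path to $\beta$ goes through the trace rather than through, say, a Geršgorin-style or Weyl-perturbation bound on $\lambda_{\max}(\B{A}-\B{L})$ in terms of $\lambda_{\max}(\B{L})$, which would give a weaker and less informative inequality. I would therefore present the argument in the order above, so that the two cheap inequalities set up notation and convexity/PSD facts that make the trace-based third step almost a one-liner.
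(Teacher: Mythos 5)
Your proof is correct and follows essentially the same route as the paper: the nontrivial lower bound $\beta \le \lambda^\R{c}$ is obtained by the identical trace argument ($\lambda_{\max} \ge \R{tr}/n_\R{v}$ followed by $\R{tr}(\B{L}) = 2 n_\R{e} \le n_\R{v}(n_\R{v}-1)$). The only cosmetic difference is that you establish $\lambda^\R{c} \le \lambda^\R{u}$ via Loewner monotonicity of $\lambda_{\max}$ using $\B{L} \succeq 0$, whereas the paper uses subadditivity of the logarithmic $2$-norm together with $\mu_2(-\B{L}) = 0$; for symmetric matrices these are the same underlying fact.
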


We take $J = - \frac{\lambda^\R{c} - \lambda^\R{u}}	{\abs{\beta - \lambda^\R{u}}}$.
From Lemma \ref{lem:bounds_eigenvalue_coupled_net}, $J \in [0, 1]$, with $0$ corresponding to no improvement over the uncoupled systems, and $1$ corresponding to the ideal fastest synchronization time.


Finally, we constrain the designed graph $g^\diamond$ to be connected ($\lambda_2(\B{L}) > 0$) and to have at most $n_\R{e}^*$ edges ($\R{tr}(\B{L}) \le \frac{n_\R{e}^*}{2}$).

\subsection{Case with nonlinear node dynamics}
\label{sec:nonlinear_dynamics}

In this case, we still assume a linear diffusive coupling $m^\diamond$, but the dynamics of the complex network $\eta^\diamond$ are given by 
\begin{equation}\label{eq:complex_network_nonlinear}
	\dot{x}_i = f(x_i) + \sum\nolimits_{j = 1}^{n_{\R{v}}} L_{ij} (x_j - x_i), \ \  \forall i \in \{1, \dots, n_{\R{v}}\},
\end{equation}
where $f : \BB{R} \rightarrow \BB{R}$ is a nonlinear dynamics; here, we set $f(x_i) = a_i (x_i - x_i^3)$, with $a_i \in \BB{R}_{> 0} \ \forall i$.
Network \eqref{eq:complex_network_nonlinear} may or may not synchronize, depending on the network structure $g^\diamond$ (i.e., on the properties of its associated graph Laplacian $\B{L}$).

\begin{remark}
	In Problem \ref{prob:optimization_problem_general}, the assumption that $q = q_h^\R{d} \ \forall h$ requires the initial conditions of the systems to be the same across the dataset, which can be restrictive in applications.
	Relaxing this assumption leads to the case described in Remark \ref{rem:case_different_dynamics}.
\end{remark}

Again, we assume $J$ is a metric related to the time required to achieve synchronization.
Namely, define the \emph{average state} $\tilde{x}(t) \coloneqq \sum_{i=1}^{n_\R{v}} x_i(t)$,
the \emph{node error} $e_i(t) \coloneqq x_i(t) - \tilde{x}(t)$,
and the \emph{total error} $e_\R{tot}(t) \coloneqq \sum_{i=1}^{n_\R{v}} e_i(t)$.
Then, let $t_{\R{sync}} $ be the smallest time instant such that $e_\R{tot}(t) \le e_{\R{thres}}, \forall t \in [t_{\R{sync}}, t_{\R{max}}]$, where $e_{\R{thres}}, t_{\R{max}} \in \BB{R}_{\ge 0}$; if such time instant does not exist, we take $t_{\R{sync}} = t_{\R{max}}$.
Then, we take $J = 1 - \frac{t_{\R{sync}}} {t_{\R{max}}} \in [0 , 1]$, so that $J = 0$ corresponds to synchronization being achieved at time $t_{\R{max}}$ or to no synchronization, whereas $J = 1$ corresponds to synchronization being reached at time $t = 0$.

As in the previous case study, we require the designed graph $g^\diamond$ to be connected and to have at most $n_\R{e}^*$ edges.

\subsection{Datasets}

To examine the case studies and illustrate our data-driven approach to network design, we consider five datasets: 
$\C{D}_{\R{middle}}^\ell$, 
$\C{D}_{\R{small}}^\ell$, 
$\C{D}_{\R{large}}^\ell$ for the linear case, and
$\C{D}_{\R{middle}}^{\R{n}\ell}$, 
$\C{D}_{\R{large}}^{\R{n}\ell}$ for the nonlinear case.
The subscripts refer to the size of the dataset (see Table \ref{tab:datasets}).
Each datasets is generated pseudo-randomly in $20$ \emph{iterations}, as described in the Appendix.

\begin{table}[t]
  \centering
  \begin{tabular}{lllll} 
      \toprule
      Dataset & $n_\R{v}$ & $n_\R{e}^*$ & mean num.~samples & coverage of decision space\\
      \midrule
      $\C{D}_{\R{middle}}^\ell$ & 20 & 45 & 641.85 & $4.090 \cdot 10^{-53} \%$\\
			$\C{D}_{\R{small}}^\ell$ & 10 & 20 & 259.50 & $1.883 \cdot 10^{-53} \%$\\
			$\C{D}_{\R{large}}^\ell$ & 20 & 45 & 4689.80 & $1.360 \cdot 10^{-8} \%$\\
			$\C{D}_{\R{middle}}^{\R{n}\ell}$ & 10 & 20 & 223.05 & $6.466 \cdot 10^{-10} \%$\\
			$\C{D}_{\R{large}}^{\R{n}\ell}$ & 10 & 20 & 4689.80 & $1.360 \cdot 10^{-8} \%$\\
      \bottomrule
  \end{tabular}
	\caption{Information on datasets (averaged over iterations).}
  \label{tab:datasets}
\end{table}

\section{Analysis of the datasets}%
\label{sec:analysis}

In Figure \ref{fig:datasets}, we portray the number of edges of the graphs in two representative datasets, $\C{D}_{\R{middle}}^\ell$ and $\C{D}_{\R{middle}}^{\R{n}\ell}$, together with the associated value of the objective function $J$, also stored in the datasets. 
As expected, the objective $J$ is found to increase nonlinearly for higher numbers of edges $n_\R{e}$.

\begin{figure}[t]
	\centering
	\subfloat[]{\includegraphics[max width=\columnwidth]{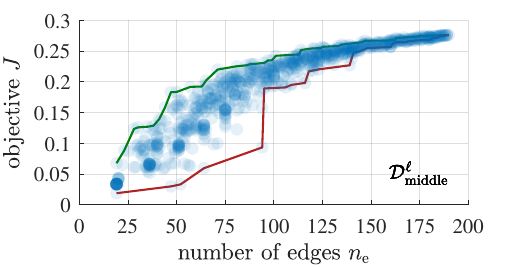}
	\label{fig:dataset_01_l}}\\
	\vspace*{\marginUnderSubfig}%
	\subfloat[]{\includegraphics[max width=\columnwidth]{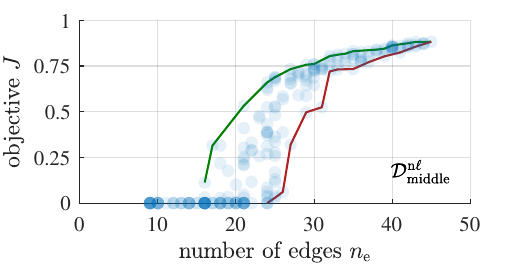}
	\label{fig:dataset_01_nl}}
	\caption{%
		Single iterations of representative datasets.
		The green and red lines are the good ($P_\R{g}$) and bad ($P_\R{b}$) Pareto fronts, respectively (§ \ref{sec:direct_strategies}).}
	\label{fig:datasets}
\end{figure}

Next, to assess whether having entangled graphs is important to maximize $J$, as it is to minimize the eigenratio $Q$ \cite{donetti2005entangled} (recall that $Q$ does not account for node dynamics), in Table \ref{tab:correlations_variances_J} we report $\R{corr}(\widehat{\R{var}}(\B{d}), J)$ and $\R{corr}(\widehat{\R{var}}(\B{b}), J)$,
comparing them with $\R{corr}(\widehat{\R{var}}(\B{d}), -Q)$ and $\R{corr}(\widehat{\R{var}}(\B{b}), -Q)$, respectively.
We see that having a large variance in $\B{d}$ and $\B{b}$ is detrimental both for $J$ and $Q$.
Surprisingly, $\abs{\R{corr}(\widehat{\R{var}}(\B{d}), J)} > \abs{\R{corr}(\widehat{\R{var}}(\B{d}), -Q)}$, suggesting an even larger effect of the entangled nature of the graphs on $J$ with respect to $Q$.

\begin{table}[t]
  \centering
  \begin{tabular}{ll|ll} 
      \toprule
      \multicolumn{4}{c}{\textit{Linear case study (§ \ref{sec:linear_dynamics}) --- Dataset $\C{D}_{\R{middle}}^\ell$}}\\
      \noalign{\smallskip}
      $\R{corr}(\widehat{\R{var}}(\B{d}), J)$ & $-0.430$ &
			$\R{corr}(\widehat{\R{var}}(\B{d}), -Q)$ &	$-0.181$\\
			$\R{corr}(\widehat{\R{var}}(\B{b}), J)$ & $-0.590$ &
			$\R{corr}(\widehat{\R{var}}(\B{b}), -Q)$ & $-0.559$\\
      \midrule
      \multicolumn{4}{c}{\textit{Nonlinear case study (§ \ref{sec:nonlinear_dynamics}) --- Dataset $\C{D}_{\R{middle}}^{\R{n}\ell}$}}\\
      \noalign{\smallskip}
      $\R{corr}(\widehat{\R{var}}(\B{d}), J)$ & $-0.260$ &
			$\R{corr}(\widehat{\R{var}}(\B{d}), -Q)$ &	$-0.088$\\
			$\R{corr}(\widehat{\R{var}}(\B{b}), J)$ & $-0.505$ &
			$\R{corr}(\widehat{\R{var}}(\B{b}), -Q)$ & $-0.578$\\
      \bottomrule
  \end{tabular}
	\caption{Metrics concerning entangled graphs (averaged over datasets iterations) when maximizing $J$ and $-Q$.}
  \label{tab:correlations_variances_J}
\end{table}

However, the entangled structure of a graph is not sufficient to optimize $J$.
To see this, in Figure \ref{fig:optimal_graphs}, we report the estimated optimal graphs $\B{L}^*$ obtained by optimizing $J$ and $Q$ through a genetic algorithm (whose parameters are in the Appendix), assuming node dynamics were known.
Indeed, including node dynamics in the objective function makes the optimal graph heterogeneous and its node degrees unequally distributed.

\begin{figure}[t]
	\centering
	\subfloat[]{%
	\includegraphics[scale=0.82]{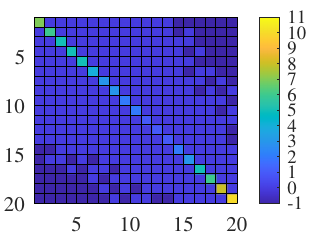}%
	\label{fig:optimal_graphs_dataset_01_dynamics}}%
	\hfill
	\subfloat[]{%
	\includegraphics[scale=0.82]{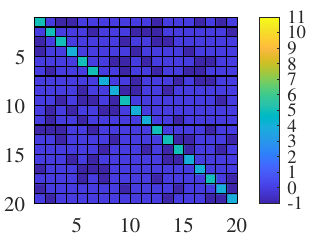}%
	\label{fig:optimal_graphs_dataset_01_eigenratio}}\\
	\vspace*{\marginUnderSubfig}%
	\subfloat[]{%
	\includegraphics[scale=0.82]{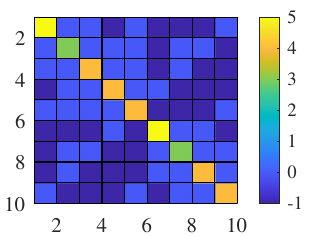}%
	\label{fig:optimal_graphs_dataset_04_dynamics}}%
	\hfill
	\subfloat[]{%
	\includegraphics[scale=0.82]{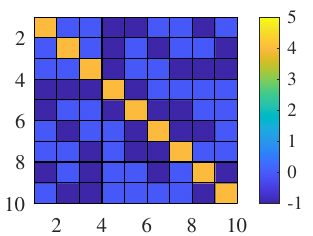}%
	\label{fig:optimal_graphs_dataset_04_eigenratio}}
	\caption{Estimated optimal Laplacian matrices $\B{L}^*$.
	(a, b) linear case in § \ref{sec:linear_dynamics};
	(c, d) nonlinear case in § \ref{sec:nonlinear_dynamics};
	(a, c) maximizing $J$;
	(b, d) maximizing $-Q$. 
	$\widehat{\R{var}}(d)$:
	(a) 0.232, (b) 0.007, (c) 0.036, (d) 0; 
	$\widehat{\R{var}}(b)$:
	(a) 0.148, (b) 0.002, (c) 0.013, (d) 0.001; 
	variance in lengths of the shortest paths: 
	(a) 0.602, (b) 0.319, (c) 0.373, (d) 0.292; 
	diameter:
	(a) 4, (b) 3, (c) 3, (d) 3; 
	girth:
	(a) 4, (b) 3, (c) 3, (d) 3; 
	average of the shortest cycles from a vertex to itself:
	(a) 4, (b) 3.5, (c) 3.200, (d) 3.}
	\label{fig:optimal_graphs}
\end{figure}

This fact is confirmed by Figure \ref{fig:correlations}, where we report $\R{corr}(\hat{d}_i, J)$ and $\R{corr}(\hat{b}_i, J)$ computed over the graphs in the datasets.
The results suggest that, in the linear case, slow nodes (i.e., with small $\abs{a}_i$) should have the largest degrees, which is in agreement with Figure \ref{fig:optimal_graphs_dataset_01_dynamics}.
In the nonlinear case, 
correlations are weaker, and we cannot draw definitive conclusions.

\begin{figure}[t]
	\centering
	\subfloat[]{\includegraphics[max width=\columnwidth]{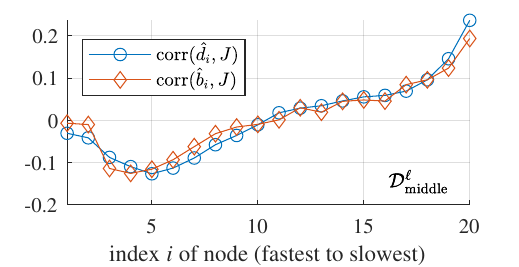}
	\label{fig:correlations_degrees_and_betw_dataset_linear}}\\
	\vspace*{\marginUnderSubfig}%
	\subfloat[]{\includegraphics[max width=\columnwidth]{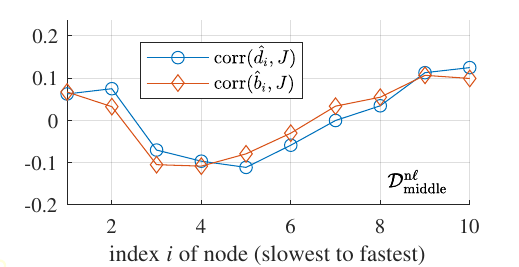}
	\label{fig:correlations_degrees_and_betw_dataset_nl}}
	\caption{Relevant correlations in representative datasets, averaged over datasets iterations. 
	$\R{corr}(i, \R{corr}(\hat{d}_i, J))$:
	(a) 0.837, (b) 0.303; 
	$\R{corr}(i, \R{corr}(\hat{b}_i, J))$:
	(a) 0.780, (b) 0.382.}
	\label{fig:correlations}
\end{figure}


\section{Data-driven network design strategies}%
\label{sec:algorithms}

Next, we present different approaches for data-driven network design.
These can be divided into \emph{indirect} strategies---where we first assess what are the features of the graphs in the dataset associated with the maximal $J$ and then generate a new network structure having those features---and \emph{direct} strategies---where the (sub)optimal graph is directly generated by appropriately combining those in the dataset.
Below, we let $\tilde{J}(\B{L})$ denote $J(q, g, m)$, as $q$ and $m$ are fixed.

\subsection{Indirect strategies}%

We start by describing the indirect strategies, based on extrapolating meaningful features from the graphs in the dataset.

\paragraph{Desired degree distribution (DDD)}

Let us define $\rho_i \coloneqq \R{corr}(\hat{d}_i, J)$ (cf.~Figure \ref{fig:correlations}).
As $\rho_i$ is a measure of how beneficial it is that node $i$ has a large degree for having a large $J$, we design the graph structure so as to have a degree distribution that replicates the shape of $\B{\rho}$.
To do so, we convert $\B{\rho}$ to a graphical%
\footnote{A vector $\B{d} \in \BB{N}_{> 0}^{n_\R{v}}$ is \emph{graphical} if there exists a graph (called a \emph{realization} of $\B{d}$) without loops or repeated edges in which vertex $i$ has degree $d_i$.
Graphicality can be checked, e.g., with the Erd\"os-Gallai condition \cite{blitzstein2011sequential}.}
degree distribution $\B{d}$ using Algorithm \ref{alg:degrees_from_curve}.
The procedure allocates degrees iteratively, subtracting each time a fixed quantity from the elements in $\B{\rho}$.
Then, to generate a graph with degree distribution $\B{d}$, we use a slightly modified version of \cite[Algorithm 1]{blitzstein2011sequential}, which allocates edges iteratively, each time subtracting $1$ from the elements of $\B{d}$.%
\footnote{Our modification is that, when allocating edges, the end vertex is chosen as the vertex with the larger value of $d_i$, rather than a pseudorandom one.}

\begin{algorithm}[t]
	\caption{Degrees distribution from vector}
	\label{alg:degrees_from_curve}
	\KwIn{Vector $\B{\rho} \in \BB{R}^{n_\R{v}}$; 
	num.~edges $n_{\R{e}}$ ($n_{\R{e}}^{\R{min}} \le n_{\R{e}} \le n_{\R{e}}^{\R{max}}$).}
	\KwOut{Degree distribution $\B{d} \in \BB{N}_{\ge 1}^{n_\R{v}}$.}

	$\B{\rho} \leftarrow \B{\rho} - \min_i \rho_i$\;
	$\B{d} \leftarrow \B{1}$%
	\tcp*{ensure connectedness}
	$n_{\B{d}} \leftarrow 2 n_{\R{e}} - n_{\R{v}}$%
	\tcp*{number of degrees to assign}
	$\Delta = \sum_i \rho_i / n_\B{d}$%
	\tcp*{a decrement unit}
	\While{$n_{\B{d}} > 0$}{
		$i \leftarrow \arg \max_j \rho_j$;\quad
		$d_i \leftarrow d_i + 1$;\quad
		$n_{\B{d}} \leftarrow n_{\B{d}} - 1$\;
		\leIf{$d_i < n_\R{v} - 1$}%
		{$\rho_i \leftarrow \rho_i - \Delta$;}
		{$\rho_i \leftarrow -\infty$}
	}
	\While{$\B{d}$ is not graphical}{
		$i \leftarrow \arg \max_j d_j$;\quad
		$d_{i} \leftarrow d_i - 1$\;
	}
\end{algorithm}










\paragraph{Neural network and genetic algorithm (NNGA)}

We consider a neural network (NN) that outputs an approximation of $\tilde{J}(\B{L})$, and takes as input: 
the off-diagonal elements of $\B{L}$,
$\lambda_2(\B{L})$, 
$\lambda_{n_\R{v}}(\B{L})$,
$n_\R{e}$,
$\hat{d}_i \ \forall i$, 
$\widehat{\R{var}}(\B{d})$, 
global and local clustering coefficients, 
average and variance of the shortest paths, 
the diameter, and
eigenvector centralities.
The NN is trained on the pairs of graphs and associated objective values $J$ in the dataset.
As the NN approximates $\tilde{J}(g)$, it is then used to run a numerical optimization through a genetic algorithm, to seek the optimal graph.
All parameters are in the Appendix.

\subsection{Direct strategies}%
\label{sec:direct_strategies}

We denote the Laplacian matrices of the graphs in the dataset by 
$\C{L} \coloneqq \{ \B{L}_1, \B{L}_2, \dots, \B{L}_{n_\R{d}} \}$.
Algorithm \ref{alg:combination_laplacian} generates a new graph structure by combining a subset of those in the dataset, say $\C{L}^\R{c} \subseteq \C{L}$, according to some weights $w_1, w_2, \dots$.
A (sub)optimal graph that attempts to maximize $J$ can then be obtained by careful selection of $\C{L}^\R{c}$ and the associated weights.
In the following, in Algorithm \ref{alg:combination_laplacian} we always take $n_\R{e}^\R{out} = n_\R{e}^*$; moreover, we define the set of graphs with $e$ edges as
${\C{L}}_e \coloneqq \{ \B{L} \in \C{L} \mid n_\R{e}(\B{L}) = e \}$, for $e \in \{n_\R{e}^{\R{min}}, \dots, n_\R{e}^{\R{max}}\}$, and the mean objective of such graphs as
$B_e \coloneqq \R{mean}_{\B{L}_i \in {\C{L}}_e} \tilde{J}(\B{L}_i)$.

Next, we propose a set of strategies to select the graphs from the dataset to be combined and the associated weights.

\begin{algorithm}[t]
	\caption{Combination of undir.~unweighted graphs}
	\label{alg:combination_laplacian}

	\KwIn{$n_\B{L}$ Laplacian matrices $ \B{L}_1, \B{L}_2, \dots, \B{L}_{n_\B{L}}$;
	weights $w_1, w_2, \dots, w_{n_\B{L}}$;
	num.~edges in output graph $n_\R{e}^{\R{out}}$.}
	\KwOut{Combined Laplacian matrix $\B{L}^\R{out}$.}

	\For{all possible edges $\{j, k\}$}{
		$z_{jk} \gets \sum_{i = 1}^{n_\B{L}} w_i \cdot  [-\B{L}_i]_{jk}$\;
	}
	$\C{E}_{\R{selected}} \leftarrow n_\R{e}^{\R{out}}\text{-}\R{args} \max_{\{j, k\}} z_{jk}$\;%
	Build a Laplacian matrix $\B{L}^\R{out}$ with edges $\C{E}_{\R{selected}}$.
\end{algorithm}

\paragraph{All graphs (A)}

We combine all graphs in the dataset, i.e., $\C{L}^\R{c} = \C{L}$, associating to graph $i$ a weight $w_i = \tilde{J}(\B{L}_i)^\alpha$; in particular, we select $\alpha = 3$.

\paragraph{All graphs normalized (AN)}

We again combine all graphs in the dataset ($\C{L}^\R{c} = \C{L}$), but with weights selected as $w_i = ( \tilde{J}(\B{L}_i) - B_{n_\R{e}(\B{L}_i)} )^\alpha$, choosing $\alpha = 3$.

\paragraph{Best and worst graphs for every fixed number of edges (BWNE)}

$\C{L}^\R{c}$ contains the fraction $p = 0.1$ of the best and worst graphs in the sets $\C{L}_e$ for each fixed  number of edges $e$.
More formally, $\C{L}^\R{c} = \C{L}_\R{best} \cup \C{L}_\R{worst}$, where, letting 
$p_e \coloneqq \min \{1, \ \R{round} \left( p \abs{\C{L}_e} \right) \}, \forall e \in  \{n_\R{e}^{\R{min}}, \dots, n_\R{e}^{\R{max}}\}$, we let
\begin{align*}
	\C{L}_\R{best} &\coloneqq \bigcup_{e \in \{n_\R{e}^{\R{min}}, \dots, n_\R{e}^{\R{max}} \}} p_e \text{-} \R{args} \max_{\B{L} \in \C{L}_e} \tilde{J}(\B{L}),\\
	\C{L}_\R{worst} &\coloneqq \bigcup_{e \in \{n_\R{e}^{\R{min}}, \dots, n_\R{e}^{\R{max}} \}} p_e \text{-} \R{args} \min_{\B{L} \in \C{L}_e} \tilde{J}(\B{L}).
\end{align*}
We take $w_i = +1$ if $\B{L}_i \in \C{L}_\R{best}$ and $w_i = -1$ if $\B{L}_i \in \C{L}_\R{worst}$.

\paragraph{Pareto front (PF)}

Let $\C{L}_{P_\R{g}}$ be the set of graphs in the dataset that are \emph{Pareto optimal} with respect to having small $n_\R{e}$ and being associated to a large $J$ \cite{censor1977pareto}, and are associated to $J > 0.01$.
Then, we compute the ``good'' Pareto front $P_\R{g} : \BB{N} \rightarrow \BB{R}$ (which associates to some number of edges $n_{\R{e}}$ a value of $J$; depicted as a green line in Figure \ref{fig:datasets}) by linearly interpolating the points in the $n_{\R{e}}$-$J$ plane associated to the graphs in $\C{L}_{P_\R{g}}$.
%
%
%
%
%
Next, we define the normalized distance of a graph in the dataset with Laplacian $\B{L}$ from the Pareto front as
\begin{equation}\label{eq:ranking_pareto}
	\delta_{P_\R{g}}(\B{L}) \coloneqq \begin{dcases}
		\frac{
			P_\R{g}(n_\R{e}(\B{L})) - \tilde{J}(\B{L})
		}{
			P_\R{g}(n_\R{e}(\B{L})) - B_{n_\R{e}(\B{L})}
		}, &\parbox[m]{.4\columnwidth}{
			if $P_\R{g}$ is defined for $n_\R{e}(\B{L})$ and $P_\R{g}(n_\R{e}(\B{L})) \ne B_{n_\R{e}(\B{L})}$,}\\
		\infty, &\text{otherwise}.
	\end{dcases}		
\end{equation}
Then, letting $p = 0.04$, we set $\C{L}^\R{c} = k \text{-} \R{args} \min_{\B{L}} \delta_{P_\R{g}}(\B{L})$, where $k = \max\{ |\C{L}_{P_\R{g}}|, \left\lceil{p \abs{\C{L}}}\right\rceil\}$.
The weights are $w_i = e^{-\delta_{P_\R{g}}(\B{L}_i)}$.

\paragraph{Double Pareto front (DPF)}

In this strategy, the set $\C{L}^\R{c}$ of graphs to be combined contains (i) those used in the ``Pareto front'' strategy, with the same weights, and (ii) a portion of the graphs closest to the ``bad'' Pareto front $P_\R{b}$ (depicted as a red line in Figure \ref{fig:datasets}), found interpolating Pareto optimal graphs with a maximal number of edges $n_\R{e}$ and a minimal value of $J$ (in this case, we do not exclude graphs associated to $J \le 0.01$).
The selection procedure remains the same,
except that the weights are chosen as $w_i = - e^{-\delta_{P_\R{b}}(\B{L}_i)}$, where $\delta_{P_\R{b}}$ is the normalized distance from $P_\R{b}$, computed using an expression analogous to \eqref{eq:ranking_pareto}.

\def\maxWidthPlotStrategies{0.95}
\begin{figure}[!ht]
	\centering
	\includegraphics[trim={0 2mm 0 1mm},clip,max width=\maxWidthPlotStrategies\columnwidth]{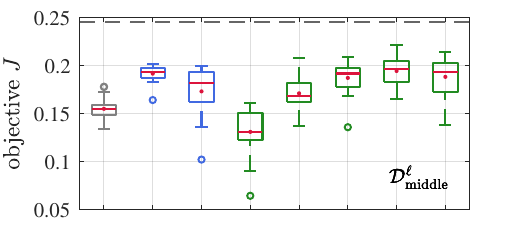}\\
	\includegraphics[trim={0 2mm 0 1mm},clip,max width=\maxWidthPlotStrategies\columnwidth]{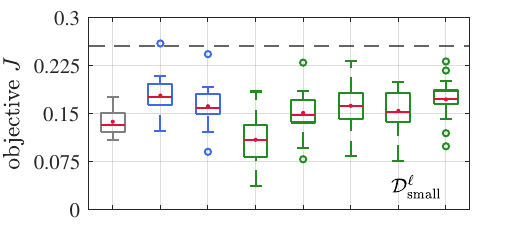}\\
	\includegraphics[trim={0 2mm 0 1mm},clip,max width=\maxWidthPlotStrategies\columnwidth]{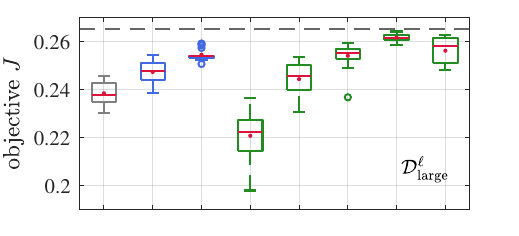}\\
	\includegraphics[trim={0 2mm 0 1mm},clip,max width=\maxWidthPlotStrategies\columnwidth]{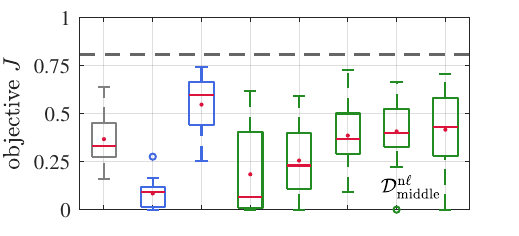}\\
	\includegraphics[trim={0 0mm 0 1mm},clip,max width=\maxWidthPlotStrategies\columnwidth]{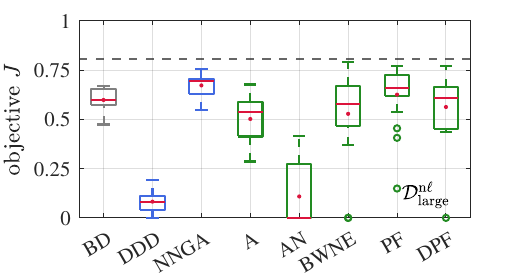}
	\caption{Box plots of the values of $J$ obtained by the strategies in § \ref{sec:algorithms}, in datasets' iterations.
	``BD'' is the best data sample; red dots are means; red lines are medians; circles are outliers; dashed grey lines are $J^*$.}
	\label{fig:strategies}
\end{figure}

\section{Validation and discussion}%
\label{sec:numerical_results}

In Figure \ref{fig:strategies}, we report the values of  $J$ obtained by the graphs designed through the strategies presented in Section \ref{sec:algorithms} over all iterations of all datasets, and compare them with the best values of $J$ found in the datasets.
%
%
We also report the value $J^*$ of the objective associated to the optimal graphs $\B{L}^*$ found with a genetic algorithm, assuming the dynamics were known.

We find that the DDD strategy performs well in the linear case (panels 1, 2, 3), but worse in the nonlinear one (panels 4, 5). 
The results suggest that the degree distribution has a greater effect on $J$ in the linear case, which is in agreement with the results reported in Figure \ref{fig:correlations}.
On the other hand, the NNGA strategy performs better than the best data sample in all cases, even when the dataset is small (panel 2).
%
The A and AN strategies performed the worst (even if $\alpha$ is changed).
This demonstrates that, counterintuitively, taking information from all graphs in the dataset can be detrimental; we believe this might be because the (important) information contained in the graphs closest to the good and bad Pareto fronts becomes obfuscated when $\C{L}^\R{c}$ is too large.
Indeed, the BWNE and DPF strategies have a smaller $\C{L}^\R{c}$ and yield better $J$.
The PF strategy, having smaller $\C{L}^\R{c}$, performs even better, with mean and quartiles always higher than the best data samples, and close to $J^*$ in relatively large datasets (panels 3, 5).

In conclusion, the NNGA and PF strategies proved to be the best ones, although the former requires much longer computation times than the latter.
As expected, most strategies tend to perform better in the linear case study than in the nonlinear one, and yield better results when the dataset is larger (compare panels 1 to 3 and 4 to 5).




\section*{Appendix}

\begin{proof}[Proof of Lemma \ref{lem:bounds_eigenvalue_coupled_net}]
We first prove that $\lambda^\R{c} \le \lambda^\R{u}$.
As $\B{A}$ and $\B{L}$ are symmetric, we have $\mu_2(\B{A} - \B{L}) = \max_i \lambda_i(\B{A} - \B{L}) = \lambda^\R{c}$ and $\mu_2(\B{A} - \B{L}) \le \mu_2(\B{A}) + \mu_2(-\B{L}) = \max_i \lambda_i(\B{A}) + 0 = \lambda^\R{u}$.

Next, we prove that $\beta \le \lambda^\R{c}$.
As $\sum_{i = 1}^{n_{\R{v}}} \lambda_i(\B{A} - \B{L}) = \R{tr}(\B{A} - \B{L})$,
\begin{equation*}
	\lambda^\R{c} \ge \frac{\R{tr}(\B{A} - \B{L})}{n_{\R{v}}} \ge
	\min_{\tilde{\B{L}}} \frac{\R{tr}(\B{A} - \tilde{\B{L}})}{n_{\R{v}}} = 
	\frac{1}{n_{\R{v}}} \left(\R{tr}(\B{A}) - \min_{\tilde{\B{L}}} \R{tr}(\tilde{\B{L}}) \right),
\end{equation*}
where $\R{tr}(\B{A}) = \sum_{i = 1}^{n_{\R{v}}} a_i$, and it is immediate to verify that $\min_{\tilde{\B{L}}} \R{tr}(-\tilde{\B{L}}) = - n_{\R{v}} (n_{\R{v}}-1)$, which happens when $- \tilde{L}_{ii} = - (n_{\R{v}}-1), \forall i \in \{1, \dots, n_\R{v}\}$, i.e., the graph is complete.
\end{proof}

\paragraph{Datasets' composition}%

All datesets contain 
$1$ complete graph,
$1$ path graph,
$1$ ring graph,
$n_{\R{v}}$ star graphs (each with a different center),
$1$ $2$-nearest neighbors graph,
a variable number of 
Erd\"os-Renyi graphs, 
small-world graphs,
scale-free graphs \cite{boccaletti2006complex},
and graphs with $e$ random edges, $\forall e \in \{n_{\R{e}}^\R{min}, \dots, n_{\R{e}}^\R{max}-1\} \setminus \{n_\R{e}^*\}$.
Disconnected graphs are discarded.
%

\paragraph{Datasets' coverage}

The number of connected labeled graphs with $10$ and $20$ vertices are 
$\approx 3.450 \cdot 10^{13}$ and
$\approx 1.569 \cdot 10^{57}$, respectively \cite{sloane2023number}.
We report the datasets' size and percent coverage of the decision spaces in Table \ref{tab:datasets}.


\paragraph{Node dynamics}

In $\C{D}_{\R{middle}}^\ell$ and $\C{D}_{\R{large}}^\ell$, 
$a_i = - n_\R{v} + (i-1)$;
in $\C{D}_{\R{small}}^\ell$, 
$a_i$ is chosen randomly in $[-20, -1]$.
In $\C{D}_{\R{middle}}^{\R{n}\ell}$ and $\C{D}_{\R{large}}^{\R{n}\ell}$, $a_i = 1 + 0.2 i$, and $\B{x}(t=0) = [-1 \ -2 \ -3 \ -4 \ -5 \ 2 \ 4 \ 6 \ 8 \ 10]\T$;
$e_{\R{thres}} = 0.01$, $t_{\R{max}} = 1$.

\paragraph{Neural networks}

Type of neural network: feedforward; 
optimizer: Adam; 
mini-batch size: $256$; 
learning rate: $0.01$, multiplied by $\gamma$ every $100$ episodes; 
activation functions: ``tanh''. 
For 
$\C{D}_{\R{middle}}^\ell$, 
$\C{D}_{\R{small}}^\ell$, 
$\C{D}_{\R{large}}^\ell$,
layers: 2 with 4 nodes each; 
epochs: $4000$; 
$\gamma = 0.95$. 
For
$\C{D}_{\R{middle}}^{\R{n}\ell}$, 
$\C{D}_{\R{large}}^{\R{n}\ell}$,
layers: 2 with 11 nodes each; 
epochs: $8000$; 
$\gamma = 0.975$.

\paragraph{Genetic algorithm}

Population size: 200; 
elite samples: $140$; 
crossover fraction 0.5; 
generations after which to stop if did not improve: 200; 
improvement tolerance on objective: $10^{-8}$; 
tolerance on constraints: $10^{-4}$.




\bibliographystyle{\bibliographyStyleName} 
\bibliography{\myBibliographyFile}

\end{document}